\newtheorem{lemma}{Lemma}
\begin{document}

%\preprint{APS/123-QED}

\title{Quantum network coding for quantum repeaters}% Force line breaks with \\
%\thanks{A footnote to the article title}%

\author{Takahiko Satoh}
 \email{satoh@is.s.u-tokyo.ac.jp}
\affiliation{
 Department of Computer Science, Graduate School of Information
 Science and Technology, The University of Tokyo, 7-3-1, Hongo,
 Bunkyo-ku, Tokyo, Japan}
%\affiliation{
% Institute for Nano Quantum Information Electronics, 4-6-1, Komaba,
% Meguro-ku, Tokyo, Japan}
% \altaffiliation[Also at ]{Graduate School of Information Science and
%   Technology, The University of Tokyo.}
\author{Fran\c cois Le Gall}
 \email{legall@is.s.u-tokyo.ac.jp}
\affiliation{
 Department of Computer Science, Graduate School of Information
 Science and Technology, The University of Tokyo, 7-3-1, Hongo,
 Bunkyo-ku, Tokyo, Japan}
\author{Hiroshi Imai}
 \email{imai@is.s.u-tokyo.ac.jp}
\affiliation{
 Department of Computer Science, Graduate School of Information
 Science and Technology, The University of Tokyo, 7-3-1, Hongo,
 Bunkyo-ku, Tokyo, Japan}
%\affiliation{
% Institute for Nano Quantum Information Electronics, 4-6-1, Komaba,
% Meguro-ku, Tokyo, Japan}
\date{\today}% It is always \today, today,
             %  but any date may be explicitly specified

\begin{abstract}
This paper considers quantum network coding, which is a recent technique that enables
quantum information to be sent on complex networks at higher rates than by using straightforward routing strategies.
Kobayashi et al. have recently showed the potential of this technique by demonstrating how any 
classical network coding protocol gives rise to a quantum network coding protocol.
They nevertheless primarily focused on an abstract model,
in which quantum resource such as quantum registers can be freely introduced at each node.
In this work, we present a protocol for quantum network coding
%show that quantum network coding techniques can be used 
under weaker (and more practical) assumptions: our new protocol works even for 
quantum networks where adjacent nodes initially share one EPR-pair but cannot add any quantum registers or 
send any quantum information. 
A typically example of networks satisfying this assumption is {\emph{quantum repeater networks}}, 
which are promising candidates for the implementation of large scale quantum networks.
Our results thus show, for the first time, that quantum network coding techniques can increase the transmission
rate in such quantum networks as well.
\end{abstract}

\maketitle

%\tableofcontents

\section{Introduction}
Quantum communications hold potentialities which are 
qualitatively different from classical communications. 
For example, 
quantum key distribution (QKD) provides
shared, secret bits (useful for classical cryptography) whose secrecy
does not depend on the presumed difficulty of factoring large numbers or other
number-theoretic problems,
as the commonly-used Diffie-Hellman key exchange protocol does. 

Urban scale and complex topology QKD networks have already been
constructed experimentally \cite{SECOQC,Sasaki_11}.
However, it is still difficult to realize long distance quantum
communication.
Quantum repeaters \cite{repeater2,Lloyd_2004} are a potential approach
for dealing with this problem.
Quantum repeaters have three important functions:
the first is the basic physical creation of entanglement over long
distances,
the second is management of imperfections in the created quantum states (e.g.,
purification \cite{repeater,Briegel_2007,VanMeter_2009_2} or recent works
using error correction different from purification
\cite{VanMeter_2009,Munro_2010,VanMeter_2010}), 
and the third is extending entanglement from the endpoints of a single
channel to distant nodes in a topologically complex
network (e.g., entanglement swapping \cite{swapping,swapping2,VanMeter_2011}).

In quantum repeater networks, EPR-pairs are consumed as a source of
quantum communication and require a high cost for
sharing and conservation.
The communication capacity of a quantum repeater network is limited by the maximum
number of qubits the quantum repeater can store and operate on at one time.
Hence, in the future, large and complex quantum repeater networks will
be confronted with the bottleneck problem caused by shortage of quantum resources.

Meanwhile, large scale classical networks such as the Internet have
continued to increase their communication volume, and also have the bandwidth bottleneck problem.  
To address this problem, classical network coding \cite{network_coding} is drawing attention.
One of the most useful applications of this method is throughput
enhancement for certain traffic patterns: network coding is able to achieve higher throughput
than independent forwarding of every data packet, by active encoding of
the packets at intermediate nodes. 
We show an example of multiple-unicast transmission over the
directed butterfly network by using this technique in Fig.~\ref{classic}.

\begin{figure}
\includegraphics[width=86mm]{./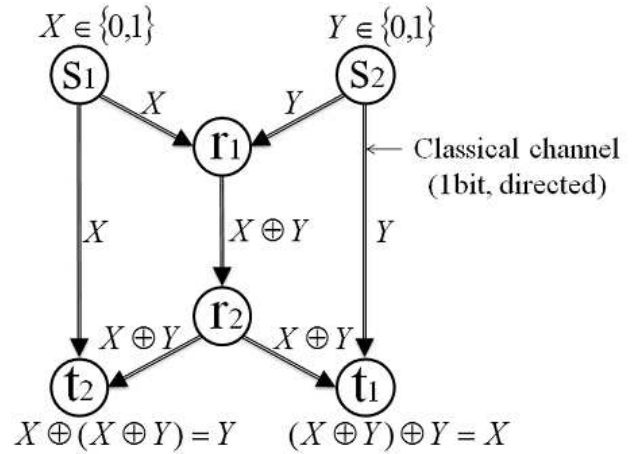}
\caption{\label{classic}The butterfly network and a classical coding
  protocol. Source nodes ($s_{1}$ and $s_{2}$) have for input bits $X$
and $Y$. The task is to send simultaneously bit $X$ from $s_1$ to $t_1$
and bit $Y$ from $s_2$ to $t_2$. This task is implemented
by using a XOR operation at relay node $r_{1}$ and target nodes
($t_{1}$ and $t_{2}$). Note this task cannot be solved by using routing.}
\end{figure}

Recently, researchers expanded network coding to include quantum
information \cite{quantum_coding,quantum_coding2,quantum_coding3,quantum_coding4} and showed that network coding using quantum information
is available without infringement of the non-cloning theorem (which forbids duplication of an 
unknown quantum state).
After that, \cite{kobayashi_coding,kobayashi_coding2,kobayashi_coding3} proved that, for any graph, quantum perfect
network coding is feasible, if free classical channels are available, whenever
classical network coding is possible. 
But these later works do not consider concrete implementation issues and,
especially, assume the availability of additional quantum resources such as quantum registers at each node of the network.
Implementation is nevertheless a fundamental problem. Indeed, in order to be able to use network coding on real quantum
networks, the amount of quantum resources required by the protocol need to be minimized.

In this work, we study quantum network coding
for practical quantum networks where adjacent 
nodes initially share one EPR-pair but cannot add any quantum registers or send any quantum information.
Typical examples are the quantum repeater networks discussed above.  
Since this setting forbids the introduction of quantum registers, 
the methods from \cite{kobayashi_coding,kobayashi_coding2,kobayashi_coding3} cannot be applied directly.
Our results nevertheless demonstrate that quantum network coding can be realized in this model as well
and are, to the best of our knowledge, the first application of network coding to
increase the transmission rate in quantum repeater networks.
This may become an effective countermeasure against communication
congestion in quantum repeater networks. 

Our results are obtained by constructing a version of the protocol in~\cite{kobayashi_coding3}
that does not require the introduction 
of any quantum register. This is non-trivial and requires new ideas.
The key idea is to convert, using only local operations and classical communication,
the EPR-pairs between adjacent nodes into appropriate entangled states
of higher dimension shared between distant nodes.
To do this, we introduce two new techniques inspired by quantum
teleportation~\cite{teleportation} and one-way quantum computation~\cite{one-way}, which we call
``Connection'' and ``Removal'', that enable us to manipulate such
entangled states and systematize the methods of encoding.

\section{Preliminaries}
\subsection{Notations}
We suppose that the reader is familiar with the basics of quantum
information theory and refer to \cite{NC} for a good reference. 
In classical information science, the fundamental unit of information is
described as a binary digit (bit).
In the case of quantum information, a quantum bit (qubit) is
the equivalent of a bit. A qubit is expressed as a superposition of two
orthonormal quantum states $\lvert 0\rangle$ and $\lvert
1\rangle$ with amplitudes $\alpha$ and $\beta$
as follows:
\begin{equation}
\lvert \psi\rangle = \alpha\lvert 0\rangle + \beta\lvert 1\rangle,
\nonumber
\end{equation}
where $\alpha$ and $\beta$ are complex number satisfying 
$\lvert \alpha \rvert^{2} + \lvert \beta \rvert^{2} = 1.$
A general quantum state of $n$ qubits can be written as
$\lvert \psi \rangle = \sum_{x \in \{0,1\}^{n}}^{} \alpha_{x}\lvert x
\rangle$, where $\alpha_{x}$ are complex numbers such that
$\sum_{x\in \{0,1\}^{n}}^{} \lvert \alpha_{x} \rvert^{2}=1$.

%\subsection{Transform operation}
In this paper, we will use the Pauli operators 
$\sigma_{X}$ and $\sigma_{Z}$ and the Hadamard operator, which are the following
single qubit transformations:
\begin{equation}
\sigma_{X}:= \lvert 1 \rangle \langle 0 \rvert + \lvert 0 \rangle
\langle 1 \rvert ,
\nonumber
\end{equation}
\begin{equation}
\sigma_{Z}:= \lvert 0 \rangle \langle 0 \rvert - \lvert 1 \rangle
\langle 1 \rvert ,
\nonumber
\end{equation}
\begin{equation}
H:= \frac{1}{\sqrt{2}}(\lvert 0 \rangle \langle 0 \rvert + \lvert 1
\rangle \langle 0 \rvert + \lvert 0 \rangle \langle 1 \rvert - \lvert
1 \rangle \langle 1 \rvert ).
\nonumber
\end{equation}

We denote by ${ \lvert + \rangle , \lvert - \rangle }$ the Hadamard basis:
\begin{equation}
\begin{split}
\lvert +\rangle = H\lvert 0\rangle =
\frac{1}{\sqrt[]{\mathstrut 2}}\left(\lvert 0 \rangle + \lvert 1
\rangle \right), \\
\lvert -\rangle = H\lvert 1\rangle =
\frac{1}{\sqrt[]{\mathstrut 2}}\left(\lvert 0 \rangle - \lvert 1
\rangle \right).
\nonumber
\end{split}
\end{equation}

We will also use the Control-NOT gate, which is the following two qubits transformation.
\begin{eqnarray}
\nonumber
{\rm CNOT^{ (\bf{A},\bf{B} )} } := \lvert 0 \rangle_{A} \lvert 0
\rangle_{B} \langle 0 \rvert_{A} \langle 0 \rvert_{B} 
&+& \lvert 0 \rangle_{A} \lvert 1
\rangle_{B} \langle 0 \rvert_{A} \langle 1 \rvert_{B} \\
\nonumber
+ \lvert 1 \rangle_{A} \lvert 1
\rangle_{B} \langle 1 \rvert_{A} \langle 0 \rvert_{B} 
&+& \lvert 1 \rangle_{A} \lvert 0
\rangle_{B} \langle 1 \rvert_{A} \langle 1 \rvert_{B} .
\end{eqnarray}

We denote by $\lvert \Psi^{+} \rangle$ and $\lvert \Phi^{+}
\rangle$ the following two qubits state (EPR-pairs):
\begin{equation}
\begin{split}
\lvert \Psi^{+} \rangle =
\frac{1}{\sqrt[]{\mathstrut 2}}\left(\lvert 00 \rangle + \lvert 11
\rangle \right), \quad
\lvert \Phi^{+} \rangle =
\frac{1}{\sqrt[]{\mathstrut 2}}\left(\lvert 01 \rangle + \lvert 10
\rangle \right)
,
\nonumber
\end{split}
\end{equation}

and by $\lvert GHZ \rangle$ the following three qubits state:
\begin{equation}
\lvert GHZ \rangle =
\frac{1}{\sqrt[]{\mathstrut 2}}\left(\lvert 000 \rangle + \lvert 111
\rangle \right).
\nonumber
\end{equation}

\subsection{Quantum repeater network}
We define a quantum repeater network as a network consisting of a number of quantum repeaters,
undirected classical channels and EPR-pairs
$\lvert\Psi^{+}\rangle$ (each pair of adjacent quantum repeaters shares one EPR-pair).
We show an example of network with three quantum
repeaters in Fig.~\ref{repeater_network}.
\begin{figure}
\includegraphics[width=76mm]{./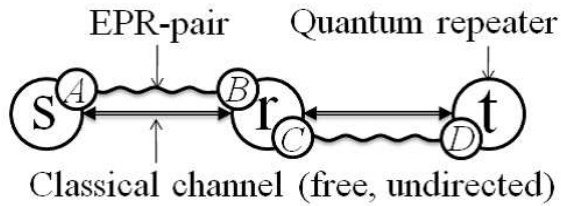}
\caption{\label{repeater_network} An example of quantum repeater
  network. There are three quantum repeaters 
  ($s$, $r$ and $t$), two free classical channels and
  two EPR-pairs $\lvert \Psi^{+}\rangle_{AB}$ and $\lvert
  \Psi^{+}\rangle_{CD}$ between $s$-$r$ and $r$-$t$, respectively.}
\end{figure}

On the network of Fig.~\ref{repeater_network}, quantum communications
are possible between adjacent repeaters ($s$-$r$ and $r$-$t$) by
teleportation using shared EPR-pairs. 
Furthermore, quantum communication between the non-adjacent repeaters
$s$ and $t$ is possible by applying entanglement swapping (the relay
repeater $r$ converts the two EPR-pairs $\lvert \Psi^{+}\rangle_{AB}\otimes\lvert
  \Psi^{+}\rangle_{CD}$ to one EPR-pair $\lvert \Psi^{+}\rangle_{AD}$).
In this way, each repeater performs quantum communication by EPR-pairs
and LOCC on this network.

The present paper will show that, for specific networks, network
coding can achieve a better throughput than this simple entanglement
swapping strategy.

\section{Quantum repeater network coding}
We present the setting for our protocol in Fig.~\ref{quantumgraph}.
\begin{figure}
\includegraphics[width=8.6cm]{./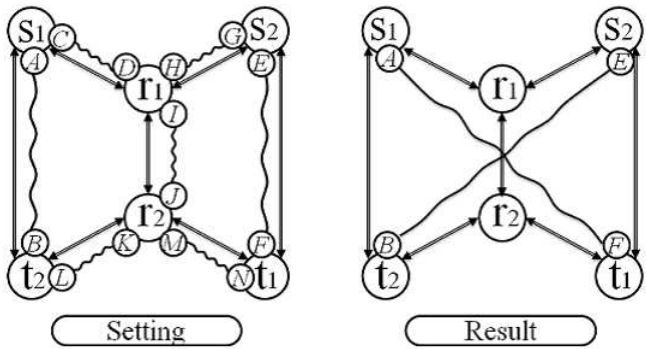}
\caption{\label{quantumgraph}The setting for our protocol. After the excution of the protocol, $s_{1}$ and $t_{1}$
  (similarly, $s_{2}$ and $t_{2}$) share one EPR-pair and
  are then able to perform quantum teleportation.}
\end{figure}
The goal of this work is to simultaneous send quantum information
between two pairs of repeaters $((s_{1},t_{1})$ and $(s_{2},t_{2}))$
located diagonally on a butterfly quantum repeaters network.
For this purpose, we construct a protocol for quantum repeater network
coding that generates EPR-pairs $\lvert\Psi^{+}\rangle_{AF}$ and
$\lvert\Psi^{+}\rangle_{BE}$ by using only LOCC and shared EPR-pairs
between adjacent quantum repeaters.
Simultaneous quantum communication between $(s_{1},t_{1})$ and
$(s_{2},t_{2})$ can then by achieved by teleportation using these EPR-pairs.

We cannot generate these EPR-pairs simultaneously by using only
entanglement swapping \cite{swapping,swapping2} because of 
the constitution of this network.
Moreover, we cannot apply existing quantum network coding methods
(\cite{kobayashi_coding3}) directly because these methods require the
introduction of intermediate quantum registers, which is not possible
in our model of quantum repeater networks.
In this work, we construct a protocol for sharing EPR-pairs without
additional quantum registers.
 
In subsection A and B we first show two new techniques.
In subsection C we give an overview of our protocol.
In subsection D we present a preliminary protocol.
In subsection E we give the final version of our protocol.

\subsection{Technique 1: Connection}
Our first technique is called Connection.
Connection is a non-unitary operation between two repeaters ($u$
and $v$, respectively).
Repeater $u$ has Control and Resource qubits ($C$ and $R$, respectively).
Repeater $v$ has a Target qubit ($T$).
We show the procedure for Connection as Table~\ref{Proc_Con}.
\begin{table}
\caption{\label{Proc_Con} ${\bf Con}^{C}_{R->T}$}
\begin{ruledtabular}
 \begin{tabular}{ll} 
& $C$ and $R$ are 1-qubit registers owned by $u$.\\
&  $T$ is a 1-qubit register owned by $v$. \\
  {\bf Step 1.}& $u$ applies $CNOT^{(C,R)}$. \\
  {\bf Step 2.}& $u$ measures $R$ in the $\{\left\lvert 0 \rangle ,\lvert 1
    \right\rangle\}$ basis.\\
          & Let $a \in \{0,1\}$ be the outcome.\\
  {\bf Step 3.}& $u$ sends $a$ to $v$ by a classical channel.\\
  {\bf Step 4.}& If $a = 1$ then $v$ applies $\sigma_{X}$ to $T$\\
 \end{tabular}
\end{ruledtabular}
\end{table}
This technique corresponds to sending one bit in the original classical
network coding scheme and is utilizing the basis manipulation method of quantum
teleportation~\cite{teleportation}.
The following lemma shows the action of Connection.

\begin{lemma}
\label{lemma_con}
Let $\lvert \Psi_{init} \rangle$ be a state of the form
\begin{equation}
\lvert \Psi_{init} \rangle = \left( \alpha \lvert \psi_{0} \rangle  \lvert 0
\rangle_{C} + \beta \lvert \psi_{1} \rangle  \lvert 1 \rangle_{C} \right) 
\otimes \lvert \Psi^{+} \rangle_{RT} \otimes \lvert \Phi \rangle,
\nonumber
\end{equation}
where $\alpha^{2}+\beta^{2}=1$ and $\lvert\psi_{0}\rangle$, $\lvert\psi_{1}\rangle$ and
$\lvert\Phi\rangle$ are arbitrary quantum states.  
Then the state after applying ${\bf Con}^{C}_{R->T}$ to $\lvert
  \Psi_{init} \rangle$ is
\begin{equation}
\begin{split}
\lvert \Psi_{final} \rangle = \left( \alpha \lvert \psi_{0} \rangle \lvert 00 \rangle_{CT}
 + \beta \lvert \psi_{1} \rangle  \lvert 11 \rangle_{CT} \right) \otimes \lvert \Phi \rangle, 
\nonumber
\end{split}
\end{equation}
where register $R$ can be disregarded.
\end{lemma}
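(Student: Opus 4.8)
The plan is to push $\lvert\Psi_{init}\rangle$ through the four steps of Connection one at a time, exploiting the fact that $CNOT^{(C,R)}$ acts only on $C$ and $R$, the measurement acts only on $R$, and the correction acts only on $T$; by linearity the registers carrying $\lvert\psi_{0}\rangle$, $\lvert\psi_{1}\rangle$ and $\lvert\Phi\rangle$ are inert spectators that can be carried along unchanged. First I would expand the EPR-pair and write
\begin{equation}
\lvert \Psi_{init} \rangle = \frac{1}{\sqrt{2}}\bigl( \alpha \lvert \psi_{0} \rangle \lvert 0 \rangle_{C} + \beta \lvert \psi_{1} \rangle \lvert 1 \rangle_{C} \bigr)\bigl( \lvert 0 \rangle_{R}\lvert 0 \rangle_{T} + \lvert 1 \rangle_{R}\lvert 1 \rangle_{T} \bigr) \otimes \lvert \Phi \rangle ,
\nonumber
\end{equation}
and then apply Step~1: since $CNOT^{(C,R)}$ flips $R$ precisely on the $\lvert 1\rangle_{C}$ branch, the $\alpha$-term is left alone while the $\beta$-term becomes $\beta\lvert\psi_{1}\rangle\lvert 1\rangle_{C}\cdot\frac{1}{\sqrt{2}}(\lvert 1\rangle_{R}\lvert 0\rangle_{T} + \lvert 0\rangle_{R}\lvert 1\rangle_{T})$.

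Next I would regroup the resulting state according to the value of register $R$, which is what Step~2 measures, obtaining
\begin{equation}
\frac{1}{\sqrt{2}}\Bigl[ \lvert 0\rangle_{R}\bigl(\alpha\lvert\psi_{0}\rangle\lvert 00\rangle_{CT} + \beta\lvert\psi_{1}\rangle\lvert 11\rangle_{CT}\bigr) + \lvert 1\rangle_{R}\bigl(\alpha\lvert\psi_{0}\rangle\lvert 0\rangle_{C}\lvert 1\rangle_{T} + \beta\lvert\psi_{1}\rangle\lvert 1\rangle_{C}\lvert 0\rangle_{T}\bigr) \Bigr] \otimes \lvert\Phi\rangle .
\nonumber
\end{equation}
From this expression one reads off that each outcome $a\in\{0,1\}$ occurs with probability $1/2$, that after the measurement $R$ is in the product state $\lvert a\rangle$ disentangled from the rest (so it may indeed be disregarded), and that the renormalized post-measurement state is obtained by deleting the factor $\frac{1}{\sqrt{2}}$.

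Finally I would carry out the classically-controlled correction of Step~4. For $a=0$ no correction is applied and the state is already $\lvert\Psi_{final}\rangle$; for $a=1$, applying $\sigma_{X}$ to $T$ sends $\lvert 1\rangle_{T}\mapsto\lvert 0\rangle_{T}$ and $\lvert 0\rangle_{T}\mapsto\lvert 1\rangle_{T}$, which turns $\alpha\lvert\psi_{0}\rangle\lvert 0\rangle_{C}\lvert 1\rangle_{T} + \beta\lvert\psi_{1}\rangle\lvert 1\rangle_{C}\lvert 0\rangle_{T}$ into $\alpha\lvert\psi_{0}\rangle\lvert 00\rangle_{CT} + \beta\lvert\psi_{1}\rangle\lvert 11\rangle_{CT}$, again $\lvert\Psi_{final}\rangle$. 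The step most worth dwelling on is exactly this case analysis: one must verify that the two measurement branches, after their respective local corrections, coincide, which is what makes Connection a deterministic operation even though it is built from a measurement, and is the real content of the lemma. It is not technically difficult; the remainder is routine bookkeeping (the action of $CNOT$ and $\sigma_{X}$ on basis vectors, and cancellation of the normalization factor), and I foresee no genuine obstacle.
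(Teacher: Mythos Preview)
Your proposal is correct and follows essentially the same step-by-step approach as the paper's own proof: apply $CNOT^{(C,R)}$, split into the two measurement outcomes on $R$, and verify that the $\sigma_X$ correction in the $a=1$ branch yields the same state as the $a=0$ branch. The only cosmetic difference is that the paper writes the post-$CNOT$ state compactly using $\lvert\Psi^{+}\rangle_{RT}$ and $\lvert\Phi^{+}\rangle_{RT}$ rather than expanding the EPR-pair in the computational basis as you do, but the content is identical.
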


\begin{proof}
At step 1, we apply $CNOT^{(C,R)}$. The state becomes
\begin{eqnarray}
\nonumber
\lvert \Psi_{1} \rangle = 
\alpha \lvert \psi_{0} \rangle  \lvert 0 \rangle_{C} &\otimes& \lvert \Psi^{+}
\rangle_{RT} \otimes \lvert \Phi \rangle\\
 +
\beta \lvert \psi_{1} \rangle  \lvert 1 \rangle_{C} &\otimes& \lvert \Phi^{+}
\rangle_{RT} \otimes \lvert \Phi \rangle .
\nonumber
\end{eqnarray}
At step 2, we  measure $R$. When $a = 0$ the state becomes
\begin{equation}
\begin{split}
\lvert \Psi_{2} \rangle = 
\left ( \alpha \lvert \psi_{0} \rangle  \lvert 00 \rangle_{CT} + \beta \lvert
\psi_{1} \rangle  \lvert 11 \rangle_{CT}\right )\otimes \lvert
\Phi \rangle ,
\nonumber
\end{split}
\end{equation}
and when $a = 1$ the state becomes
\begin{equation}
\begin{split}
\lvert \Psi_{2'} \rangle = 
\left ( \alpha \lvert \psi_{0} \rangle  \lvert 01 \rangle_{CT} + \beta \lvert
\psi_{1} \rangle  \lvert 10 \rangle_{CT}\right )\otimes \lvert
\Phi \rangle ,
\nonumber
\end{split}
\end{equation}
where register $R$ has been disregarded since it is not entangled anymore.
At step 4, if $a = 1$ then we apply
$\sigma_{X}$ to $T$. The state becomes 
\begin{equation}
\begin{split}
\lvert \Psi_{3} \rangle = 
\left ( \alpha \lvert \psi_{0} \rangle  \lvert 00 \rangle_{CT} + \beta \lvert
\psi_{1} \rangle  \lvert 11 \rangle_{CT}\right )\otimes \lvert
\Phi \rangle
=\lvert \Psi_{final}\rangle. \qedhere
\nonumber
\end{split}
\end{equation}
\end{proof}

For example, Lemma~\ref{lemma_con} shows that applying ${\bf
  Con}^{A}_{C->D}$ to two EPR-pairs
\begin{equation}
\lvert \Psi_{init} \rangle = \lvert\Psi^{+}\rangle_{AB} \otimes
\lvert\Psi^{+}\rangle_{CD}
\nonumber
\end{equation}
gives one GHZ-state:
\begin{equation}
\lvert \Psi_{final} \rangle = \lvert GHZ \rangle_{ABD}.
\nonumber
\end{equation}
We show the corresponding quantum circuit in Fig.~\ref{trick1}.\\
\begin{figure}
\includegraphics[width=8.6cm]{./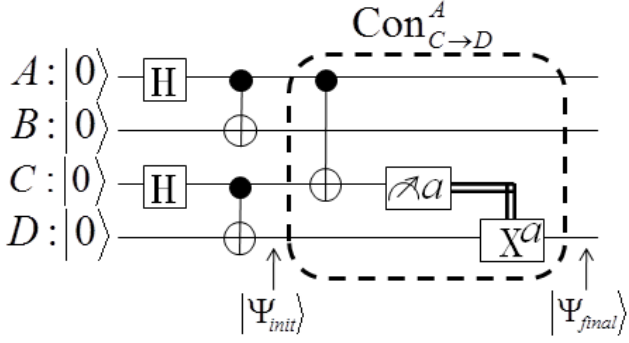}
\caption{\label{trick1}The quantum circuit for getting GHZ-states from
  two EPR-pairs by Connection.}
\end{figure}

We now show two variants of the above Connection operation.
The first variant is ``multiple resource and target qubits'' Connection and called 
  Connection:Fanout (or {\bf Fanout}). We show the procedure for Connection:Fanout
as table~\ref{Proc_Fanout}.
\begin{table}
\caption{\label{Proc_Fanout} ${\bf Fanout}^{C}_{R_{1}->T_{1},R_{2}->T_{2}}$}
\begin{ruledtabular}
 \begin{tabular}{ll} 
& $C$, $R_{1}$ and $R_{2}$ are 1-qubit registers owned by $u$. \\
&  $T_{1}$ is a  1-qubit register owned by $v$. \\
&  $T_{2}$ is a  1-qubit register owned by $w$. \\
  {\bf Step 1.}& $u$ and $v$ apply $\,\,{\bf Con}^{C}_{R_{1}->T_{1}}$. \\
  {\bf Step 2.}& $u$ and $w$ apply $\!{\bf Con}^{C}_{R_{2}->T_{2}}$. \\
 \end{tabular}
\end{ruledtabular}
\end{table}

Note that applying ${\bf Fanout}^{C}_{R_{1}->T_{1},R_{2}->T_{2}}$ is equivalent to
applying ${\bf Con}^{C}_{R_{1}->T_{1}}$ and then ${\bf
  Con}^{C}_{R_{2}->T_{2}}$. We can derive the following lemma.
\begin{lemma}
\label{lemma_copy}
Let $\lvert \Psi_{init}\rangle$ be a state of the form
\begin{equation}
\begin{split}
\lvert \Psi_{init} \rangle = \left ( \alpha \lvert\psi_{0}\rangle \lvert 0 \rangle_{A} +
\beta \lvert\psi_{1}\rangle\lvert 1 \rangle_{A} \right ) \lvert\Psi^{+}\rangle_{BC}
\lvert\Psi^{+}\rangle_{DE}\otimes\lvert\Phi\rangle,
\nonumber
\end{split}
\end{equation}
where $\alpha^{2}+\beta^{2}=1$ and $\lvert\psi_{0}\rangle$,
$\lvert\psi_{1}\rangle$ and $\lvert\Phi\rangle$ are arbitrary quantum states.  
Then the state after applying ${\bf Fanout}^{A}_{B->C,D->E}$ to $\lvert
  \Psi_{init} \rangle$ is
\begin{equation}
\nonumber
\lvert \Psi_{final} \rangle = \left( \alpha \lvert\psi_{0}\rangle\lvert 000 \rangle_{ACE}
 + \beta  \lvert\psi_{1}\rangle\lvert 111 \rangle_{ACE} \right) \otimes \lvert \Phi \rangle, 
\end{equation}
where registers $B$ and $D$ can be disregarded.
\end{lemma}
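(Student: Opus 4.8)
The plan is to derive the result directly from Lemma~\ref{lemma_con}, applied twice, using the observation made just above the statement that ${\bf Fanout}^{A}_{B->C,D->E}$ is by definition the composition of ${\bf Con}^{A}_{B->C}$ followed by ${\bf Con}^{A}_{D->E}$. The only thing to check is therefore that the hypothesis of Lemma~\ref{lemma_con} holds before each of the two Connection steps; this reduces to reorganizing the tensor factors so that the control, resource and target qubits occupy the slots that lemma expects, the amplitudes $\alpha,\beta$ carrying through unchanged.

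First I would apply Lemma~\ref{lemma_con} to ${\bf Con}^{A}_{B->C}$, matching its control qubit with $A$, its resource--target pair $\lvert\Psi^{+}\rangle_{RT}$ with $\lvert\Psi^{+}\rangle_{BC}$, and its ``arbitrary state'' $\lvert\Phi\rangle$ with $\lvert\Psi^{+}\rangle_{DE}\otimes\lvert\Phi\rangle$. This last identification is legitimate precisely because Lemma~\ref{lemma_con} places no condition on that factor and because ${\bf Con}^{A}_{B->C}$ acts trivially on the qubits $D$ and $E$. Lemma~\ref{lemma_con} then gives, after discarding the measured register $B$, the state
\begin{equation}
\left( \alpha\lvert\psi_{0}\rangle\lvert 00 \rangle_{AC} + \beta\lvert\psi_{1}\rangle\lvert 11 \rangle_{AC} \right) \otimes \lvert\Psi^{+}\rangle_{DE}\otimes\lvert\Phi\rangle.
\nonumber
\end{equation}

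Next I would regroup this as $\left( \alpha\lvert\widetilde\psi_{0}\rangle\lvert 0\rangle_{A} + \beta\lvert\widetilde\psi_{1}\rangle\lvert 1\rangle_{A}\right)\otimes\lvert\Psi^{+}\rangle_{DE}\otimes\lvert\Phi\rangle$ with $\lvert\widetilde\psi_{0}\rangle:=\lvert\psi_{0}\rangle\lvert 0\rangle_{C}$ and $\lvert\widetilde\psi_{1}\rangle:=\lvert\psi_{1}\rangle\lvert 1\rangle_{C}$, which are again valid quantum states, and apply Lemma~\ref{lemma_con} a second time, now to ${\bf Con}^{A}_{D->E}$, with control $A$, resource--target pair $\lvert\Psi^{+}\rangle_{RT}=\lvert\Psi^{+}\rangle_{DE}$, and ``arbitrary state'' $\lvert\Phi\rangle$. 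This yields, after discarding the measured register $D$,
\begin{equation}
\left( \alpha\lvert\widetilde\psi_{0}\rangle\lvert 00\rangle_{AE} + \beta\lvert\widetilde\psi_{1}\rangle\lvert 11\rangle_{AE}\right)\otimes\lvert\Phi\rangle = \left( \alpha\lvert\psi_{0}\rangle\lvert 000\rangle_{ACE} + \beta\lvert\psi_{1}\rangle\lvert 111\rangle_{ACE}\right)\otimes\lvert\Phi\rangle,
\nonumber
\end{equation}
which is exactly $\lvert\Psi_{final}\rangle$.

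I do not anticipate a genuine obstacle, since the substantive computation is already packaged in Lemma~\ref{lemma_con} and what remains is bookkeeping of tensor factors and register names. The one point that warrants a moment's care is the second application: one must verify that after the first Connection the control register $A$ still factors out linearly with coefficients $\alpha$ and $\beta$ --- it does, because the first Connection merely correlated $A$ with $C$ along the diagonal $\lvert 00\rangle,\lvert 11\rangle$ without disturbing the $\alpha/\beta$ structure --- and that the pair $\lvert\Psi^{+}\rangle_{DE}$ is left intact by the first step so that it can serve as the resource--target pair in the second.
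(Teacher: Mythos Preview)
Your proposal is correct and matches the paper's own proof essentially line for line: the paper also applies Lemma~\ref{lemma_con} first to ${\bf Con}^{A}_{B->C}$ (absorbing $\lvert\Psi^{+}\rangle_{DE}\otimes\lvert\Phi\rangle$ into the spectator factor) and then to ${\bf Con}^{A}_{D->E}$, obtaining the same intermediate and final states. Your explicit regrouping via $\lvert\widetilde\psi_{0}\rangle,\lvert\widetilde\psi_{1}\rangle$ is slightly more detailed than the paper's presentation, but the argument is the same.
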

\begin{proof}
At step 1, we apply ${\bf Con}^{A}_{B->C}$ to
$\lvert\Psi_{init}\rangle$. From Lemma~\ref{lemma_con}, the state
becomes
\begin{equation}
\begin{split}
\lvert \Psi_{1} \rangle = \left ( \alpha \lvert\psi_{0}\rangle \lvert 00 \rangle_{AC} +
\beta \lvert\psi_{1}\rangle \lvert 11 \rangle_{AC} \right ) 
\lvert\Psi^{+}\rangle_{DE}\otimes\lvert\Phi\rangle,
\nonumber
\end{split}
\end{equation}
where register $B$ has been disregarded since it is not entangled anymore.
At step 2, we apply ${\bf Con}^{A}_{D->E}$ to
$\lvert\Psi_{1}\rangle$. From Lemma~\ref{lemma_con}, the state
becomes
\begin{eqnarray}
\nonumber
\lvert \Psi_{2} \rangle & = & \left( \alpha \lvert\psi_{0}\rangle \lvert 000 \rangle_{ACE} +
\beta \lvert\psi_{1}\rangle \lvert 111 \rangle_{ACE} \right)\otimes
\lvert\Phi\rangle \\
 & = &\lvert\Psi_{final} \rangle,
\nonumber
\end{eqnarray}
where register $D$ has been disregarded since it is not entangled anymore.
\qedhere
\end{proof}

For example, Lemma~\ref{lemma_copy} shows that applying ${\bf Fanout}^{A}_{B->C,D->E}$ to 
\begin{equation*}
\begin{split}
\lvert \Psi_{init} \rangle = \left ( \alpha \lvert 0 \rangle_{A} +
\beta \lvert 1 \rangle_{A} \right ) \lvert\Psi^{+}\rangle_{BC}
\lvert\Psi^{+}\rangle_{DE}
\nonumber
\end{split}
\end{equation*}
gives the following quantum state:
\begin{equation*}
\begin{split}
\lvert \Psi_{final} \rangle =\alpha \lvert 000 \rangle_{ACE} + \beta \lvert 111
\rangle_{ACE}.
\end{split}
\end{equation*}
We show the corresponding circuit in Fig.~\ref{con_cop}.
\begin{figure}
\includegraphics[width=8.6cm]{./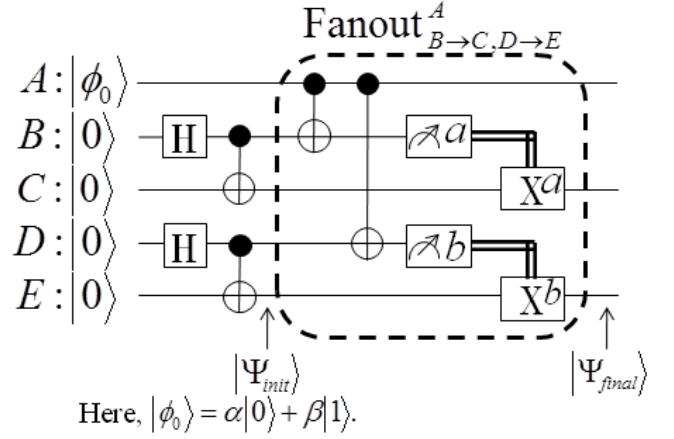}
\caption{\label{con_cop}The circuit for Connection:Fanout.}
\end{figure}

The next variant is ``multiple control qubits'' Connection and called 
  Connection:Add (or {\bf Add}). We show the procedure for Connection:Add
as Table~\ref{Proc_Add},
\begin{table}
\caption{\label{Proc_Add} ${\bf Add}^{C_{1},C_{2}}_{R->T}$}
\begin{ruledtabular}
 \begin{tabular}{ll} 
& $C_{1}$, $C_{2}$ and $R$ are 1-qubit registers owned by $u$.\\
&  $T$ is a  1-qubit registers owned by $v$. \\
  {\bf Step 1.}& $u$ applies $CNOT^{(C_{1},R)}$. \\
  {\bf Step 2.}& $u$ and $v$ apply $Con^{C_{2}}_{R->T}$. \\
 \end{tabular}
\end{ruledtabular}
\end{table}
and prove the following lemma.
\begin{lemma}
\label{lemma_add}
Let $\lvert \Psi_{init} \rangle$ be a state of the form
\begin{eqnarray}
\nonumber
\lvert \Psi_{init} \rangle =
(\alpha \lvert\psi_{0}\rangle \lvert 0 \rangle_{A} & + &
\beta \lvert\psi_{1}\rangle\lvert 1 \rangle_{A} ) \\
\otimes ( \gamma \lvert\phi_{0}\rangle \lvert 0\rangle_{B}  
 & + & \delta \lvert\phi_{1} \rangle\lvert 1 \rangle_{B} )
 \lvert\Psi^{+}\rangle_{CD}\otimes\lvert\Phi\rangle ,
\nonumber
\end{eqnarray}
where $\alpha^{2}+\beta^{2}=\gamma^{2}+\delta^{2}=1$ and
$\lvert\psi_{0}\rangle$, $\lvert\psi_{1}\rangle$, $\lvert\phi_{0}\rangle$,
$\lvert\phi_{1}\rangle$ and $\lvert\Phi\rangle$ are arbitrary quantum states.
Then the state after applying ${\bf Add}^{A,B}_{C->D}$ to $\lvert
  \Psi_{init} \rangle$ is
\begin{equation}
\begin{split}
\nonumber
\lvert \Psi_{final} \rangle =
 \big(\left(\alpha \gamma
 \lvert\psi_{0}\rangle\lvert\phi_{0}\rangle\lvert 00 \rangle_{AB} + 
\beta \delta \lvert\psi_{1}\rangle\lvert\phi_{1}\rangle\lvert 11
\rangle_{AB} \right)\lvert 0 \rangle_{D} \\ 
\nonumber +
( \alpha \delta \lvert\psi_{0}\rangle\lvert\phi_{1}\rangle\lvert 01 \rangle_{AB} +
\beta \gamma \lvert\psi_{1}\rangle\lvert\phi_{0}\rangle\lvert 10
\rangle_{AB} )\lvert 1 \rangle_{D}\big)\lvert\Phi\rangle.
\nonumber
\end{split}
\end{equation}
where register $C$ can be disregarded.
\end{lemma}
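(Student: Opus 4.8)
The plan is to mimic the proof of Lemma~\ref{lemma_copy}: run the two steps of ${\bf Add}^{A,B}_{C->D}$ in order, and reduce step~2 to an application of Lemma~\ref{lemma_con}.

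Step~1 applies $CNOT^{(A,C)}$ to $\lvert\Psi_{init}\rangle$. Since $A$ controls $C$, the pair $\lvert\Psi^{+}\rangle_{CD}$ is left untouched in the $\lvert 0\rangle_{A}$ branch and is turned into $\lvert\Phi^{+}\rangle_{CD}$ in the $\lvert 1\rangle_{A}$ branch, so the state becomes
\begin{equation}
\begin{split}
\lvert\Psi_{1}\rangle = {} & \big(\alpha\lvert\psi_{0}\rangle\lvert 0\rangle_{A}\lvert\Psi^{+}\rangle_{CD} + \beta\lvert\psi_{1}\rangle\lvert 1\rangle_{A}\lvert\Phi^{+}\rangle_{CD}\big)\\
& \big(\gamma\lvert\phi_{0}\rangle\lvert 0\rangle_{B} + \delta\lvert\phi_{1}\rangle\lvert 1\rangle_{B}\big)\otimes\lvert\Phi\rangle.
\nonumber
\end{split}
\end{equation}
Step~2 then applies ${\bf Con}^{B}_{C->D}$. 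Because the measurement of $C$ inside a Connection is a projection, this operation, conditioned on its outcome, acts as a linear map on the registers $B$, $C$, $D$; it therefore respects the superposition over $A$, and it suffices to process the two branches separately and recombine them (restoring a common normalization factor at the end). On the $\lvert 0\rangle_{A}$ branch the pair on $C$ and $D$ is $\lvert\Psi^{+}\rangle_{CD}$, so Lemma~\ref{lemma_con} applies directly, with control register $B$, resource $C$, target $D$, and with the arbitrary state of that lemma taken to be $\lvert 0\rangle_{A}\lvert\psi_{0}\rangle\lvert\Phi\rangle$; this branch becomes $\alpha\lvert\psi_{0}\rangle\lvert 0\rangle_{A}(\gamma\lvert\phi_{0}\rangle\lvert 00\rangle_{BD} + \delta\lvert\phi_{1}\rangle\lvert 11\rangle_{BD})\lvert\Phi\rangle$, with register $C$ dropping out.

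The one point needing care is the $\lvert 1\rangle_{A}$ branch, where the pair is $\lvert\Phi^{+}\rangle_{CD}$ and Lemma~\ref{lemma_con} does not apply verbatim. I would handle it by observing that $\lvert\Phi^{+}\rangle_{CD}$ is obtained from $\lvert\Psi^{+}\rangle_{CD}$ by applying $\sigma_{X}$ to register $D$, and that $\sigma_{X}$ on $D$ commutes with every operation performed by ${\bf Con}^{B}_{C->D}$: the gate $CNOT^{(B,C)}$ and the measurement of $C$ involve registers disjoint from $D$, while the conditional correction of step~4 is itself a $\sigma_{X}$ on $D$. Hence, for each measurement outcome, Connection applied to this branch yields $\sigma_{X}$ (on $D$) times the state produced by Lemma~\ref{lemma_con}, namely $\beta\lvert\psi_{1}\rangle\lvert 1\rangle_{A}(\gamma\lvert\phi_{0}\rangle\lvert 01\rangle_{BD} + \delta\lvert\phi_{1}\rangle\lvert 10\rangle_{BD})\lvert\Phi\rangle$.

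Finally I would add the two branches, expand the products $\alpha\gamma$, $\alpha\delta$, $\beta\gamma$, $\beta\delta$, and regroup the four resulting terms according to the value of $D$: the $\alpha\gamma$ and $\beta\delta$ terms carry $\lvert 0\rangle_{D}$ and the $\alpha\delta$ and $\beta\gamma$ terms carry $\lvert 1\rangle_{D}$, which is exactly $\lvert\Psi_{final}\rangle$; register $C$ is left in the measured basis state and can be disregarded. I expect the $\lvert\Phi^{+}\rangle_{CD}$ branch, i.e. the $\sigma_{X}$-commutation argument reducing it to Lemma~\ref{lemma_con}, to be the only slightly non-routine step. Alternatively, the whole claim can be verified directly: apply $CNOT^{(A,C)}$, then $CNOT^{(B,C)}$, then project $C$ onto each of its two outcomes and apply the step~4 correction, checking that both outcomes yield $\lvert\Psi_{final}\rangle$.
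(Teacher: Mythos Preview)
Your proof is correct and follows the same two-step structure as the paper's own proof: apply $CNOT^{(A,C)}$, then invoke Lemma~\ref{lemma_con} for ${\bf Con}^{B}_{C\to D}$. Your $\sigma_{X}$-commutation argument for the $\lvert\Phi^{+}\rangle_{CD}$ branch is in fact more explicit than the paper, which simply writes the state after $CNOT^{(A,C)}$ and then cites Lemma~\ref{lemma_con} for the final state without treating that branch separately.
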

\begin{proof}
At step 1, we apply $CNOT^{(A,C)}$. The state becomes
\begin{eqnarray}
\nonumber
\lvert \Psi_{1} \rangle &=& 
\big(\alpha  \lvert\psi_{0}\rangle\lvert 0 \rangle_{A} \left( \gamma
\lvert\phi_{0}\rangle\lvert 0\rangle_{B} 
 +\delta \lvert\phi_{1}\rangle\lvert 1 \rangle_{B} \right) \otimes \lvert \Psi^{+}
\rangle_{CD} \\
 &+&
\beta \lvert\psi_{1}\rangle\lvert 1 \rangle_{A} \left( \gamma
\lvert\phi_{0}\rangle\lvert 0\rangle_{B} 
 +\delta \lvert\phi_{1}\rangle\lvert 1 \rangle_{B} \right) \otimes \lvert \Phi^{+}
\rangle_{CD}\big)\lvert\Phi\rangle.
\nonumber
\end{eqnarray}
From Lemma~\ref{lemma_con}, the final state:
\begin{eqnarray*}
\lvert \Psi_{4} \rangle &=& \big(\left(\alpha \gamma
 \lvert\psi_{0}\rangle\lvert\phi_{0}\rangle\lvert 00 \rangle_{AB} + 
\beta \delta \lvert\psi_{1}\rangle\lvert\phi_{1}\rangle\lvert 11
\rangle_{AB} \right)\lvert 0 \rangle_{D} \\ 
&+& ( \alpha \delta \lvert\psi_{0}\rangle\lvert\phi_{1}\rangle\lvert 01 \rangle_{AB} +
\beta \gamma \lvert\psi_{1}\rangle\lvert\phi_{0}\rangle\lvert 10
\rangle_{AB} )\lvert 1 \rangle_{D}\big)\lvert\Phi\rangle\\
 &=&\lvert \Psi_{final}\rangle. 
\end{eqnarray*}
where register $C$ has been disregarded since it is not entangled anymore.
\qedhere
\end{proof}

For example, Lemma~\ref{lemma_add} shows that applying ${\bf Add}^{A,B}_{C->D}$ to
\begin{equation}
\begin{split}
\lvert \Psi_{init} \rangle= \left( \alpha \lvert 0 \rangle_{A} +
\beta \lvert 1 \rangle_{A} \right) \otimes \left( \gamma \lvert 0 \rangle_{B} +
\delta \lvert 1 \rangle_{B} \right) \lvert\Psi^{+}\rangle_{CD}
\nonumber
\end{split}
\end{equation}
gives the following quantum state: 
\begin{eqnarray}
\nonumber
\lvert \Psi_{final} \rangle= \left( \alpha \gamma \lvert 00 \rangle_{AB} +
\beta \delta \lvert 11 \rangle_{AB} \right)&\lvert& 0 \rangle_{D} \\
+\left( \alpha \delta \lvert 01 \rangle_{AB} +
\beta \gamma \lvert 10 \rangle_{AB} \right)&\lvert& 1 \rangle_{D}.
\nonumber
\end{eqnarray}
We show the corresponding circuit in Fig.~\ref{con_add}.
\begin{figure}
\includegraphics[width=8.6cm]{./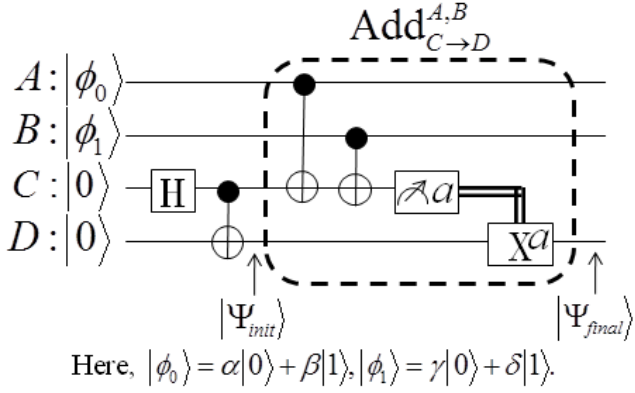}
\caption{\label{con_add}The circuit for Connection:Add.}
\end{figure}

\subsection{Technique 2: Removal}
Our second technique is called Removal.
Removal is a non-unitary operation between two repeaters ($u$ and $v$,
respectively) which deletes a resource qubit 
($R$) of a quantum state
using measurement in the Hadamard basis and
$\sigma_{Z}$.
The procedure for Removal is shown as Table~\ref{Proc_Rem}.
\begin{table}
\caption{\label{Proc_Rem} ${\bf Rem}_{R->T}$}
\begin{ruledtabular}
 \begin{tabular}{ll}
& $R$ is a 1-qubit register owned by $u$. \\
& $T$ is a 1-qubit register owned by $v$. \\
  {\bf Step 1.}& $u$ applies the Hadamard gate to $R$. \\
  {\bf Step 2.}& $u$ measures $R$ in $\{ \left\lvert 0 \rangle ,\lvert 1 \right\rangle\}$ basis.\\
          & Let $a \in \{0,1\}$ be the outcome.\\
  {\bf Step 3.}& $v$ sends $a$ to $v$ by classical channel. \\
  {\bf Step 4.}& If $a = 1$ then repeater $v$ applies $\sigma_{Z}$ to $T$\\
 \end{tabular}
\end{ruledtabular}
\end{table}
This technique is inspired by the qubit removal method using pauli measurements in
one-way quantum computing~\cite{one-way} (e.g., qubit removal from the graph states
by using a $Z$ basis) measurement).
The following lemma shows the action of Removal.
\begin{lemma}
\label{lemma_rem}
Let $\lvert \Psi_{init} \rangle$ be a state of the form
\begin{equation}
\begin{split}
\lvert \Psi_{init} \rangle = \left(\alpha \lvert 00 \rangle_{AB}  \lvert \psi_{00}
\rangle + \beta \lvert 11 \rangle_{AB}  \lvert
\psi_{11} \rangle\right)\otimes\lvert\Phi\rangle,
\nonumber
\end{split}
\end{equation}
where $\lvert \alpha \rvert^{2} + \lvert \beta \rvert^{2} =1$, and
$\lvert\psi_{00}\rangle$, $\lvert\psi_{11}\rangle$,
$\lvert\Phi\rangle$ are arbitrary quantum states. 
Then by applying ${\bf Rem}_{A->B}$ on $\lvert \Psi_{init} \rangle$, we
obtain the state
\begin{equation}
\begin{split}
\lvert \Psi_{final} \rangle = \left(\alpha \lvert 0 \rangle_{B}  \lvert \psi_{00}
\rangle + \beta \lvert 1 \rangle_{B}  \lvert
\psi_{11} \rangle\right)\otimes\lvert\Phi\rangle.
\nonumber
\end{split}
\end{equation}
where register C can be disregarded.
\end{lemma}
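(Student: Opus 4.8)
The plan is to execute the four steps of the procedure ${\bf Rem}_{A->B}$ of Table~\ref{Proc_Rem} (with the roles $R=A$ and $T=B$) on $\lvert\Psi_{init}\rangle$, tracking the global state after each step, exactly as in the proof of Lemma~\ref{lemma_con}. The factor $\lvert\Phi\rangle$, together with the registers carrying $\lvert\psi_{00}\rangle$ and $\lvert\psi_{11}\rangle$, is a pure spectator throughout, since Removal only acts on $A$ (a measurement) and on $B$ (a possible $\sigma_{Z}$).

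First I would apply Step~1, the Hadamard on $A$. Using $H\lvert 0\rangle=\lvert+\rangle$ and $H\lvert 1\rangle=\lvert-\rangle$ and then collecting terms according to the computational-basis value of $A$, the state becomes
\begin{equation}
\tfrac{1}{\sqrt{2}}\lvert 0\rangle_{A}\big(\alpha\lvert 0\rangle_{B}\lvert\psi_{00}\rangle+\beta\lvert 1\rangle_{B}\lvert\psi_{11}\rangle\big)\otimes\lvert\Phi\rangle+\tfrac{1}{\sqrt{2}}\lvert 1\rangle_{A}\big(\alpha\lvert 0\rangle_{B}\lvert\psi_{00}\rangle-\beta\lvert 1\rangle_{B}\lvert\psi_{11}\rangle\big)\otimes\lvert\Phi\rangle .
\nonumber
\end{equation}
Next, at Step~2 I would measure $A$ in the $\{\lvert 0\rangle,\lvert 1\rangle\}$ basis. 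If the outcome is $a=0$, the post-measurement state is already $\lvert\Psi_{final}\rangle$, and $A$, now unentangled, can be disregarded. If $a=1$, the state is $\big(\alpha\lvert 0\rangle_{B}\lvert\psi_{00}\rangle-\beta\lvert 1\rangle_{B}\lvert\psi_{11}\rangle\big)\otimes\lvert\Phi\rangle$, which differs from $\lvert\Psi_{final}\rangle$ only by a $\sigma_{Z}$ acting on $B$.

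Then, at Steps~3--4, $u$ sends $a$ to $v$ and, when $a=1$, $v$ applies $\sigma_{Z}$ to $B$; since $\sigma_{Z}\lvert 0\rangle=\lvert 0\rangle$ and $\sigma_{Z}\lvert 1\rangle=-\lvert 1\rangle$, this flips back the sign on the $\lvert 1\rangle_{B}$ branch (leaving the $\lvert 0\rangle_{B}$ branch untouched) and produces $\lvert\Psi_{final}\rangle$. Hence in both measurement outcomes the final state is $\lvert\Psi_{final}\rangle$, as claimed. There is no genuine obstacle here: the only points needing care are the Hadamard rewriting in Step~1 and the verification that the conditional $\sigma_{Z}$ in Step~4 cancels exactly the relative minus sign introduced by the $a=1$ outcome.
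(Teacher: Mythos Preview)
Your proof is correct and follows essentially the same step-by-step execution of ${\bf Rem}_{A\to B}$ as the paper's own proof. The only cosmetic difference is that after the Hadamard you explicitly regroup the state according to the computational-basis value of $A$, whereas the paper leaves it written with $\lvert+\rangle_{A},\lvert-\rangle_{A}$ and reads off the same two post-measurement branches directly.
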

\begin{proof}
At step 1, we apply the Hadamard gate to $A$. The state becomes
\begin{equation}
\lvert \Psi_{1} \rangle = \left(\alpha \lvert +0 \rangle_{AB}  \lvert \psi_{00}
\rangle + \beta \lvert -1 \rangle_{A,B}  \lvert
\psi_{11} \rangle\right)\otimes\lvert\Phi\rangle.
\nonumber
\end{equation}
At step 2, we measure $A$. After this step, when $a = 0$ the state
is 
\begin{equation}
\lvert \Psi_{2} \rangle = \left(\alpha \lvert 0 \rangle_{B}  \lvert \psi_{00}
\rangle + \beta \lvert 1 \rangle_{B}  \lvert
\psi_{11} \rangle\right)\otimes\lvert\Phi\rangle,
\nonumber
\end{equation}
and when $a = 1$ the state is 
\begin{equation}
\lvert \Psi_{2'} \rangle = \left(\alpha \lvert 0 \rangle_{B}  \lvert \psi_{00}
\rangle - \beta \lvert 1 \rangle_{B}  \lvert
\psi_{11} \rangle\right)\otimes\lvert\Phi\rangle ,
\nonumber
\end{equation}
where register $A$ can be disregarded since it is not entangled
anymore. At step 4, if $a = 1$ then we apply $\sigma_{Z}$ to
$T$. The state becomes
\begin{equation}
\begin{split}
\lvert \Psi_{4} \rangle = \left(\alpha \lvert 0 \rangle_{B}  \lvert \psi_{00}
\rangle + \beta \lvert 1 \rangle_{B}  \lvert
\psi_{11} \rangle\right)\otimes\lvert\Phi\rangle =\lvert \Psi_{final} \rangle.\qedhere
\nonumber
\end{split}
\end{equation}
\end{proof}
Lemma~\ref{lemma_rem} shows that Removal is able to ``delete'' the
target qubit used in a Connection operation (compare with Lemma~\ref{lemma_con}).
For instance, by applying ${\bf Rem}_{A->B}$ on the GHZ-state
\begin{equation}
\lvert \Psi_{init} \rangle = \lvert GHZ\rangle_{ABC},
\nonumber
\end{equation}
 we obtain the EPR-pair
\begin{equation}
\nonumber
\lvert \Psi_{final} \rangle = \lvert\Psi^{+}\rangle_{BC}.
\end{equation}
The corresponding circuit is shown in Fig.~\ref{trick2}.\\
\begin{figure}
\includegraphics[width=8.6cm]{./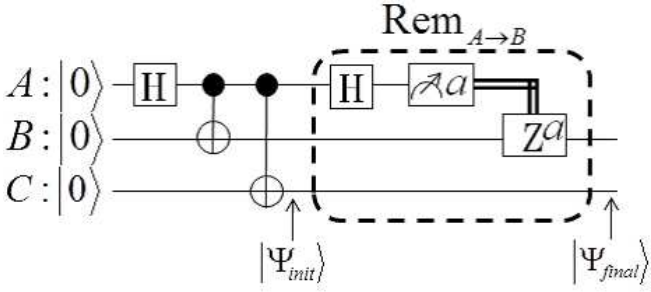}
\caption{\label{trick2}The circuit for Removal.}
\end{figure}

We now present a variant of {\bf Rem} that will delete the target qubit used in Connection:Add
operation.
We call this operation Removal:Add ({\bf RemAdd}) and show the
procedure as Table~\ref{Proc_RemAdd}.
We can derive the following lemma.
\begin{table}
\caption{\label{Proc_RemAdd} ${\bf RemAdd}_{R->T_{1},T_{2}}$}
\begin{ruledtabular}
 \begin{tabular}{ll} 
& $R$ is a 1-qubit register owned by $u$. \\
& $T_{1}$ is a 1-qubit register owned by $v$.\\ 
& $T_{2}$ is a 1-qubit register owned by $w$.\\ 
  {\bf Step 1.}& Repeater $u$ applies the Hadamard gate to $R$. \\
  {\bf Step 2.}& $u$ measures $R$ in $\{ \left\lvert 0 \rangle ,\lvert 1 \right\rangle\}$ basis.\\
          & Let $a$ be the outcome.\\
  {\bf Step 3.}& $v$ sends $a \in \{0,1\}$ to $v$ and $w$ by classical channel. \\
  {\bf Step 4.}& If $a = 1$ then repeater $v$ and $w$ applies
  $\sigma_{Z}$ \\
 & to $T_{1}$ and $T_{2}$\\
 \end{tabular}
\end{ruledtabular}
\end{table}
\begin{lemma} 
\label{lemma_remadd}
Let $\lvert \Psi_{init} \rangle$ be a state of the form
\begin{equation}
\begin{split}
\lvert \Psi_{init} \rangle = \left(\sum^{1}_{i,j=0} a_{ij} \lvert ij \rangle_{AB}
\lvert i \oplus j \rangle_{C}  \lvert \psi_{ij} \rangle\right)\otimes\lvert\Phi\rangle,
\nonumber
\end{split}
\end{equation}
where $\sum_{i,j}^{}\lvert a_{ij} \rvert^{2} =1$, and
$\lvert\psi_{i,j}\rangle$, $\lvert\Phi\rangle$ are arbitrary quantum
states. Then by applying
${\bf RemAdd}_{C->A,B}$, we obtain the state
\begin{equation}
\begin{split}
\lvert \Psi_{final} \rangle = \left(\sum_{i,j=0}^{1} a_{ij} \lvert i j \rangle_{AB}  \lvert \psi_{ij}
\rangle\right)\otimes\lvert\Phi\rangle,
\nonumber
\end{split}
\end{equation}
where register $C$ can be disregarded.
\end{lemma}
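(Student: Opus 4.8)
The plan is to track $\lvert \Psi_{init} \rangle$ through the four steps of ${\bf RemAdd}_{C->A,B}$, exactly in the style of the proof of Lemma~\ref{lemma_rem}; the only genuinely new point is that the single $\sigma_{Z}$ correction of Lemma~\ref{lemma_rem} is now replaced by simultaneous $\sigma_{Z}$ corrections on the two registers $A$ and $B$.

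First I would apply the Hadamard gate to $C$ (Step~1). Since $C$ holds $\lvert i\oplus j\rangle$ and $H\lvert k\rangle=\tfrac{1}{\sqrt{2}}(\lvert 0\rangle+(-1)^{k}\lvert 1\rangle)$, the state becomes $\big(\sum_{i,j=0}^{1} a_{ij}\lvert ij\rangle_{AB}\,\tfrac{1}{\sqrt{2}}(\lvert 0\rangle_{C}+(-1)^{i\oplus j}\lvert 1\rangle_{C})\,\lvert\psi_{ij}\rangle\big)\otimes\lvert\Phi\rangle$. Next (Step~2) I would measure $C$ in the $\{\lvert 0\rangle,\lvert 1\rangle\}$ basis. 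For outcome $a=0$ the post-measurement state (after normalization) is $\big(\sum_{i,j} a_{ij}\lvert ij\rangle_{AB}\lvert\psi_{ij}\rangle\big)\otimes\lvert\Phi\rangle$, which already equals $\lvert\Psi_{final}\rangle$; for outcome $a=1$ it is $\big(\sum_{i,j}(-1)^{i\oplus j} a_{ij}\lvert ij\rangle_{AB}\lvert\psi_{ij}\rangle\big)\otimes\lvert\Phi\rangle$. In both cases $C$ is no longer entangled and can be disregarded.

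Finally (Step~4), if $a=1$ I would apply $\sigma_{Z}$ to $A$ and $\sigma_{Z}$ to $B$. The one observation requiring a moment's care is that $(\sigma_{Z})_{A}(\sigma_{Z})_{B}\lvert ij\rangle_{AB}=(-1)^{i+j}\lvert ij\rangle_{AB}=(-1)^{i\oplus j}\lvert ij\rangle_{AB}$, because $i+j$ and $i\oplus j$ have the same parity for all $i,j\in\{0,1\}$. Hence this correction exactly cancels the phase $(-1)^{i\oplus j}$ produced by the $a=1$ measurement outcome, and the state returns to $\big(\sum_{i,j} a_{ij}\lvert ij\rangle_{AB}\lvert\psi_{ij}\rangle\big)\otimes\lvert\Phi\rangle=\lvert\Psi_{final}\rangle$. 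I do not expect any real obstacle here; the only subtlety is precisely the identity $(-1)^{i+j}=(-1)^{i\oplus j}$, which is what allows a pair of single-qubit $\sigma_{Z}$ operations on $A$ and $B$ to undo the phase that, in Lemma~\ref{lemma_rem}, a single $\sigma_{Z}$ on one register would undo.
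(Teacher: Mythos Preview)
Your proof is correct and follows essentially the same step-by-step trace of the ${\bf RemAdd}$ procedure as the paper's own proof; the only cosmetic difference is that the paper splits the sum into the $i\oplus j=0$ and $i\oplus j=1$ branches explicitly, whereas you keep the compact sum and carry the phase $(-1)^{i\oplus j}$, making the identity $(-1)^{i+j}=(-1)^{i\oplus j}$ explicit where the paper leaves it implicit.
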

\begin{proof}
At step 1, we apply the Hadamard gate to $C$. The state becomes
\begin{eqnarray}
\nonumber
\lvert \Psi_{1} \rangle = \big(\lvert +\rangle_{C}\otimes(a_{00} \lvert 00 \rangle_{AB}
\lvert \psi_{00}  \rangle + a_{11} \lvert 11 \rangle_{AB} \lvert
\psi_{11} &\rangle&) \\ + \lvert -\rangle_{C}\otimes(a_{01} \lvert 01 \rangle_{AB}
\lvert \psi_{01} \rangle + a_{10} \lvert 10 \rangle_{AB} \lvert
\psi_{10} \rangle)& \big)&\otimes\lvert\Phi\rangle.
\nonumber
\end{eqnarray}
At step 2, we measure $C$. When $a =0$ the state becomes
\begin{eqnarray}
\nonumber
\lvert \Psi_{2} \rangle = \big((a_{00} \lvert 00 \rangle_{AB}
\lvert \psi_{00}  \rangle &+& a_{11} \lvert 11 \rangle_{AB} \lvert
\psi_{11} \rangle) \quad\quad\quad\\ + (a_{01} \lvert 01 \rangle_{AB}
\lvert \psi_{01} \rangle &+& a_{10} \lvert 10 \rangle_{AB} \lvert
\psi_{10} \rangle) \big)\otimes\lvert\Phi\rangle,
\nonumber
\end{eqnarray}
and when $a =1$ the state becomes
\begin{eqnarray}
\nonumber
\lvert \Psi_{2'} \rangle = \big((a_{00} \lvert 00 \rangle_{AB}
\lvert \psi_{00}  \rangle &+& a_{11} \lvert 11 \rangle_{AB} \lvert
\psi_{11} \rangle) \quad\quad\quad\\ - (a_{01} \lvert 01 \rangle_{AB}
\lvert \psi_{01} \rangle &+& a_{10} \lvert 10 \rangle_{AB} \lvert
\psi_{10} \rangle) \big)\otimes\lvert\Phi\rangle,
\nonumber
\end{eqnarray}
where register $C$ can be disregarded since it is not entangled
anymore. At step 4, if $a =1$ then we apply $\sigma_{Z}$ to
$A$ and $B$. The state becomes
\begin{equation}
\begin{split}
\lvert \Psi_{4} \rangle = \left(\sum a_{ij} \lvert i j \rangle_{AB}  \lvert \psi_{ij}
\rangle \right)\otimes\lvert\Phi\rangle =\lvert \Psi_{final} \rangle. \qedhere
\nonumber
\end{split}
\end{equation}
\end{proof}
For example, Lemma~\ref{lemma_remadd} shows that applying ${\bf RemAdd}_{C->A,B}$ to
\begin{equation}
\begin{split}
\lvert \Psi_{init} \rangle = \left( \alpha \gamma \lvert 00 \rangle_{AB} +
\beta \delta \lvert 11 \rangle_{AB} \right)\lvert 0 \rangle_{C} \\
+\left( \alpha \delta \lvert 01 \rangle_{AB} +
\beta \gamma \lvert 10 \rangle_{AB} \right)\lvert 1 \rangle_{C}
\nonumber
\end{split}
\end{equation}
gives the following quantum state:
\begin{equation}
\begin{split}
\lvert \Psi_{final} \rangle = \left( \alpha \lvert 0 \rangle_{A} +
\beta \lvert 1 \rangle_{A} \right)\otimes 
\left( \gamma \lvert 0 \rangle_{B} +
\delta \lvert 1 \rangle_{B} \right).
\nonumber
\end{split}
\end{equation}
The corresponding circuit is shown in Fig.~\ref{rem_add}.
\begin{figure}
\includegraphics[width=8.6cm]{./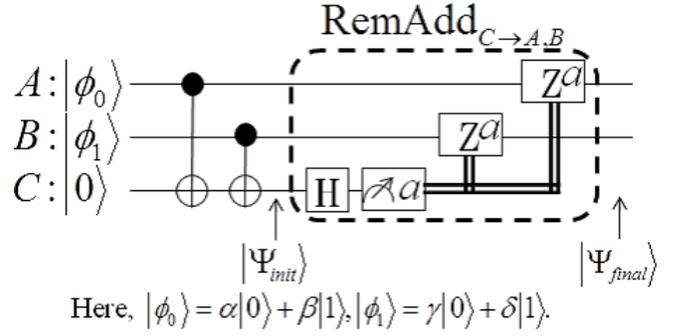}
\caption{\label{rem_add}The circuit for Removal:Add.}
\end{figure}

\subsection{Overview of our encoding protocol}
We now give an overview of our protocol for the butterfly quantum
repeater network of Fig.~\ref{quantumgraph}.
We will give a complete description of our protocol in subsections D
and E.

The first half of our protocol simulates the classical strategy of
Fig.~\ref{classic}. For this
purpose, each repeater applies {\bf Add} or {\bf Fanout} operations. We
show the correspondence between classical and quantum operations in
Fig.~\ref{cor_all}. Applying ${\bf Add}^{B,D}_{E->F}$ to
\begin{equation}
\lvert
\Psi_{init}\rangle=\lvert\Psi^{+}\rangle_{AB}\otimes\lvert\Psi^{+}\rangle_{CD}\otimes\lvert\Psi^{+}\rangle_{EF}
\nonumber
\end{equation}
  gives the following quantum state:
\begin{eqnarray*}
\lvert\Psi_{final}\rangle= \frac{1}{2}(\lvert 0000\rangle_{ABCD}+\lvert
  1111\rangle_{ABCD})&\otimes&\lvert 0\rangle_{F}\\
+\frac{1}{2}(\lvert
  1100\rangle_{ABCD}+\lvert 1100\rangle_{ABCD})&\otimes&\lvert
  1\rangle_{F}.
\nonumber
\end{eqnarray*}
Thus ${\bf Add}^{B,D}_{E->F}$ corresponds to
the classical parity operation (computing the parity of $B$ and $D$ into
register $F$). Applying ${\bf Fanout}^{B}_{C->D, E->F}$ to
\begin{equation}
\lvert \Psi_{init}\rangle=  \lvert\Psi^{+}\rangle_{AB}\otimes\lvert\Psi^{+}\rangle_{CD}\otimes\lvert\Psi^{+}\rangle_{EF}
\nonumber
\end{equation}
gives the following quantum state:
\begin{equation}
\lvert \Psi_{final}\rangle=   \frac{1}{\sqrt{2}}(\lvert 0000\rangle_{ABDF}+\lvert 1111\rangle_{ABDF}).
\nonumber
\end{equation}
${\bf Fanout}^{B}_{C->D, E->F}$ corresponds to the classical fanout
operation (copying $B$ into registers $D$ and $F$).\\
\begin{figure}[tb]
\begin{minipage}{86mm}
\subfigure[ \; Classical parity and quantum add operation.]
{\includegraphics[width=85mm]{./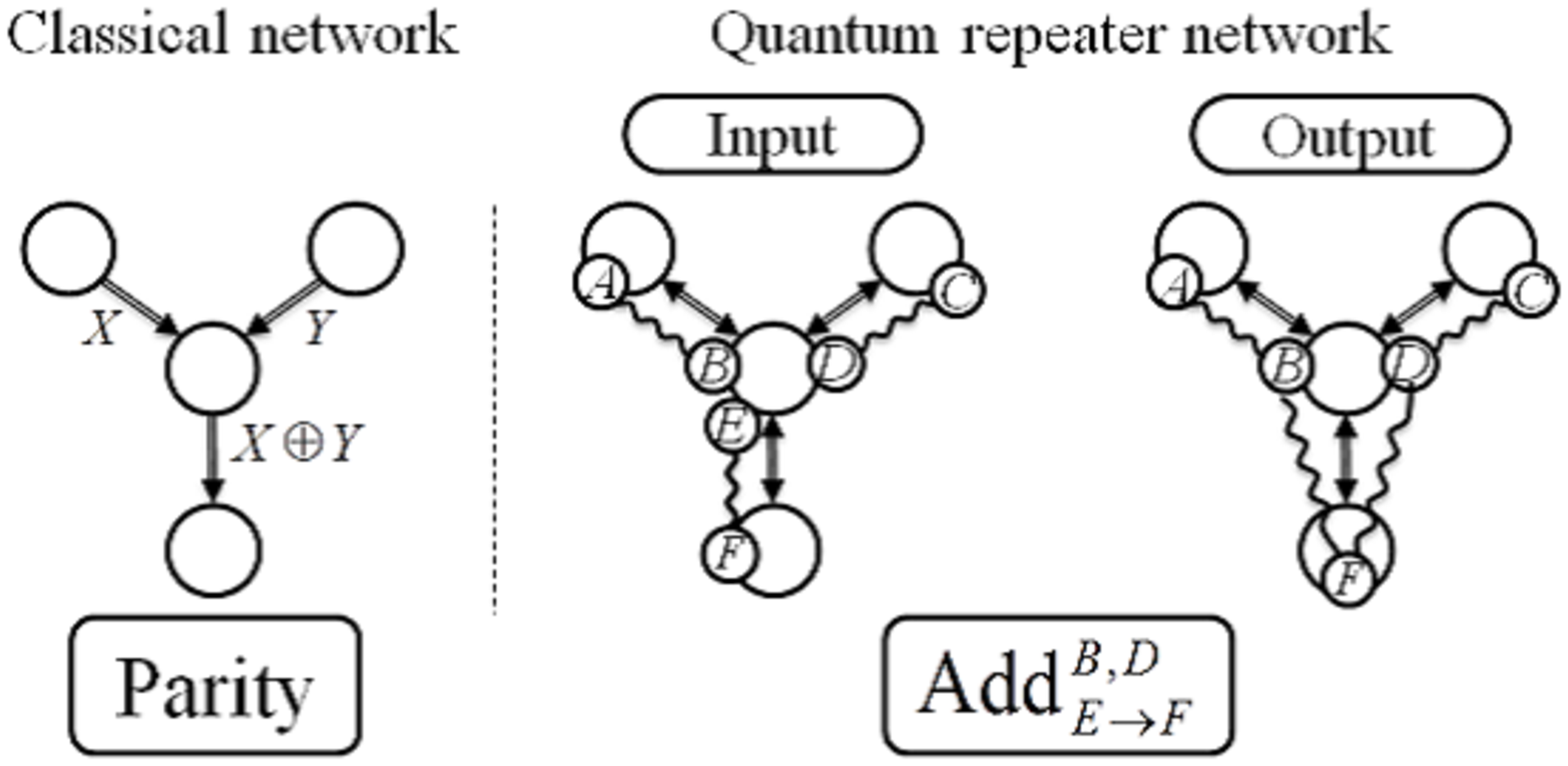}}
\end{minipage}
\begin{minipage}{86mm}
\subfigure[ \; Classical and quantum fanout operation.]
{\includegraphics[width=85mm]{./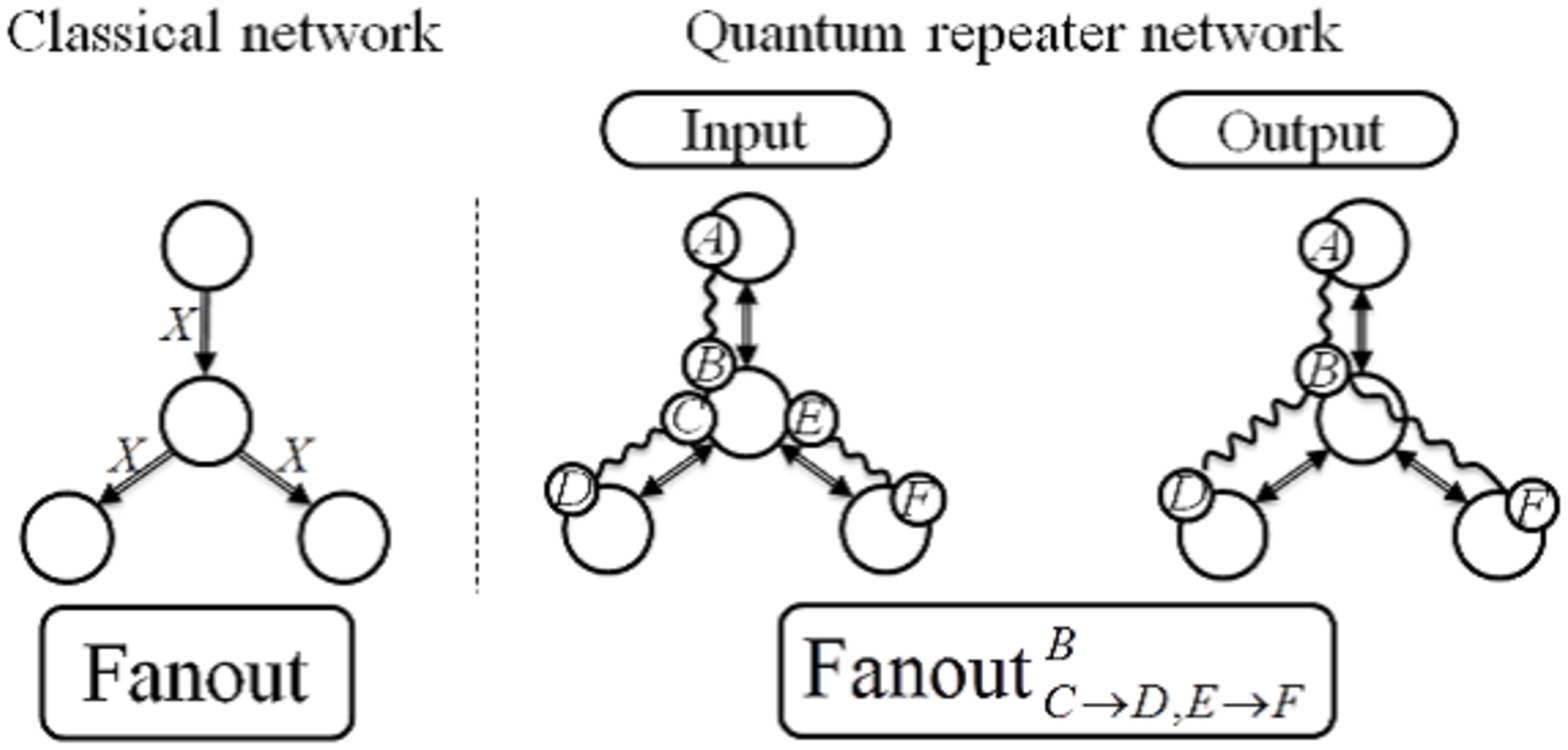}}
\end{minipage}
\caption{\label{cor_all} The correspondence between classical and
  quantum coding.}
\end{figure}

The second half of our protocol will delete redundant registers by {\bf Rem}
and {\bf RemAdd} operations.

A difficulty is that this correspondence of Fig.~\ref{cor_all} cannot be used for the
encoding performed at the sender nodes because the sender nodes have
no control qubit (i.e., the sender nodes have no incoming edges).
To deal with this, we will introduce additional
registers as control qubits and describe quantum repeater network coding
protocol for this setting in subsection D.
We will then show that our protocol can work without
additional registers in subsection E.

\subsection{Encoding with additional registers}
In this subsection we assume that $s_{1}$ has an additional register
$A'$ with state $\lvert + \rangle_{A'}$ and $s_{2}$ has an additional register
$E'$ with state $\lvert + \rangle_{E'}$. The registers $A'$ and $E'$
will be used as control qubits for the {\bf Fanout} operations performed by
the senders.
We present our procedure as Table~\ref{Proc_Enc}, and show
below the evolution of the quantum state of the system.
\begin{table}
\caption{\label{Proc_Enc} Encoding}
\begin{ruledtabular}
 \begin{tabular}{ll}
  {\bf Step 1.}& $s_{1}$ and $r_{1}$ apply ${\bf Fanout}^{A'}_{A->B,
    C->D}$,\\
          & $s_{2}$ and $r_{1}$ apply ${\bf Fanout}^{E'}_{E->F,G->H}$.  \\
  {\bf Step 2.}& $r_{1}$ and $r_{2}$ apply ${\bf Add}^{{D,H}}_{I->J}$. \\
  {\bf Step 3.}& $r_{2}$, $t_{1}$ and $t_{2}$ apply ${\bf Fanout}^{J}_{K->L, M->N}$. \\
  {\bf Step 4.}& $t_{1}$ applies ${\bf CNOT}^{(N,F)}$, $t_{2}$ applies ${\bf CNOT}^{(L,B)}$.\\
  {\bf Step 5.}& $t_{2}$ and $r_{2}$ apply ${\bf Rem}_{L->J}$,\\
          & $t_{1}$ and $r_{2}$ apply ${\bf Rem}_{N->J}$. \\
  {\bf Step 6.}& $r_{2}$ and $r_{1}$ apply ${\bf RemAdd}_{J->{D,H}}$. \\
  {\bf Step 7.}& $r_{1}$ and $s_{1}$ apply ${\bf Rem}_{D->A'}$,\\
          & $r_{1}$ and $s_{2}$ apply ${\bf Rem}_{H->E'}$. \\
 \end{tabular}
\end{ruledtabular}
\end{table}

The input state is
\begin{equation}
\label{inputstate}
\begin{split}
\lvert \Psi_{0} \rangle=
\lvert +\rangle_{A'}\lvert\Psi^{+}\rangle_{AB}\lvert\Psi^{+}\rangle_{CD}
\lvert +\rangle_{E'}\lvert\Psi^{+}\rangle_{EF}\lvert\Psi^{+}\rangle_{GH}\\
\otimes \lvert\Psi^{+}\rangle_{IJ}
\lvert\Psi^{+}\rangle_{KL}\lvert\Psi^{+}\rangle_{MN}.
\end{split}
\end{equation}

At step 1, $s_{1}$ and $t_{2}$ apply Connection:Fanout. ($s_{2}$ and
$t_{1}$ do the same.)
From Lemma \ref{lemma_copy}, the state becomes 
\begin{equation*}
\lvert \Psi_{1} \rangle=
\lvert GHZ \rangle_{A'BD}
\lvert GHZ \rangle_{E'FH}
\lvert \Psi^{+} \rangle_{IJ}
\lvert \Psi^{+} \rangle_{KL}
\lvert \Psi^{+} \rangle_{MN}.
\end{equation*}

At step 2, $r_{1}$ and $r_{2}$ apply Connection:Add.
From Lemma~\ref{lemma_add}, the state becomes
\begin{equation*}
\begin{split}
\left|\Psi_{2}\right\rangle=
\frac{1}{2}\left(\left\lvert 000000\right\rangle_{A'BDE'FH}+\lvert
111111 \rangle_{A'BDE'FH} \right)\lvert 0 \rangle_{J} \\
\otimes \lvert \Psi^{+} \rangle_{KL}\lvert \Psi^{+} \rangle_{MN}\\
+\frac{1}{2}\left(\left|000111\right\rangle_{A'BDE'FH}+\left|111000\right\rangle_{A'BDE'FH}\right)
\lvert 1 \rangle_{j}\\
\otimes  \lvert \Psi^{+} \rangle_{KL}\lvert \Psi^{+} \rangle_{MN}.
\end{split}
\end{equation*}

At step 3, $r_{2}$, $t_{1}$ and $t_{2}$ apply Connection:Fanout.
From Lemma~\ref{lemma_copy}, The state becomes
\begin{equation*}
\begin{split}
\left|\Psi_{3}\right\rangle=
\frac{1}{2}
\left(\left\lvert 000000\right\rangle_{A'BDE'FH}+\lvert 111111
\rangle_{A'BDE'FH} \right)\\
\otimes\lvert 000 \rangle_{JLN}
\\
+
\frac{1}{2}
\left(
\left|000111\right\rangle_{A'BDE'FH}
+
\left|111000\right\rangle_{A'BDE'FH}\right)\\
\otimes\lvert 111 \rangle_{JLN}.
\end{split}
\end{equation*}

At step 4, $t_{1}$ and $t_{2}$ aplly C-NOT.
The state becomes
\begin{equation*}
\begin{split}
\left|\Psi_{4}\right\rangle=
\frac{1}{2}
\left(
\left\lvert 000000\right\rangle_{A'BDE'FH}+\lvert 111111
\rangle_{A'BDE'FH} \right)\\
\otimes\lvert 000 \rangle_{JLN}\\ 
+\frac{1}{2}\left(
\left|010101\right\rangle_{A'BDE'FH}
+\left|101010\right\rangle_{A'BDE'FH}
\right)\\
\otimes\left|111\right\rangle_{JLN}.
\end{split}
\end{equation*}

At step 5, $t_{1}$ and $t_{2}$ delete redundant registers $N$ and $F$
using Removal. From Lemma~\ref{lemma_rem}, the state becomes
\begin{equation*}
\begin{split}
\left|\Psi_{5}\right\rangle=
\frac{1}{2}
\left(
\left\lvert 000000\right\rangle_{A'BDE'FH}+\lvert 111111
\rangle_{A'BDE'FH} \right)\lvert 0 \rangle_{J}\\ 
+\frac{1}{2}\left(
\left|010101\right\rangle_{A'BDE'FH}
+\left|101010\right\rangle_{A'BDE'FH}
\right)
\left|1\right\rangle_{J}.
\end{split}
\end{equation*}

At step 6, $r_{2}$ deletes the redundant  register $J$ using
Removal:Add. From Lemma~\ref{lemma_remadd}, the state becomes
\begin{equation*}
\begin{split}
\left|\Psi_{6}\right\rangle=
\frac{1}{2}
\left(
\left\lvert 000000\right\rangle_{A'BDE'FH}+\lvert 111111
\rangle_{A'BDE'FH} \right)\\ 
+\frac{1}{2}\left(
\left|010101\right\rangle_{A'BDE'FH}
+\left|101010\right\rangle_{A'BDE'FH}
\right).
\end{split}
\end{equation*}

At step 7, $r_{1}$ deletes redundant registers $D$ and $H$ using
Removal the same way as in step 5. The state becomes
\begin{equation*}
\begin{split}
\left|\Psi_{7}\right\rangle=
\frac{1}{2}\left(
\left|0000\right\rangle_{A'BE'F}
+
\left|1111\right\rangle_{A'BE'F}\right)\\
+
\frac{1}{2}\left(
\left|0110\right\rangle_{A'BE'F}
+
\left|1001\right\rangle_{A'BE'F}
\right)\\
=
\lvert \Psi^{+} \rangle_{A'F}
\otimes
\lvert \Psi^{+} \rangle_{B'E}
\end{split}
\end{equation*}
We obtain separated EPR-pairs.
The first one is owned by $(s_{1},t_{1})$, and the second by $(s_{2},t_{2})$.

In the next subsection, we describe an encoding protocol without
additional registers based on the above protocol.

\subsection{Encoding without additional registers}
We now show how to use the result of the previous subsection to construct
a network coding scheme over the network of Fig.~\ref{quantumgraph}
(i.e., without additional registers).
When there are no additional registers, the input state is
\begin{equation*}
\begin{split}
\lvert \Psi_{0'} \rangle=
\lvert\Psi^{+}\rangle_{AB}\lvert\Psi^{+}\rangle_{CD}\lvert\Psi^{+}\rangle_{EF}
\lvert\Psi^{+}\rangle_{GH}\lvert\Psi^{+}\rangle_{IJ}
\lvert\Psi^{+}\rangle_{KL}\\
\otimes\lvert\Psi^{+}\rangle_{MN}.
\end{split}
\end{equation*}
Suppose that $s_{1}$ and $t_{2}$ apply ${\bf Con}^{A}_{C->D}$, and
$s_{2}$ and $t_{1}$ apply ${\bf Con}^{E}_{G->H}$. The state becomes
\begin{equation*}
\lvert \Psi_{1'} \rangle=
\lvert GHZ \rangle_{ABD}
\lvert GHZ \rangle_{EFH}
\lvert \Psi^{+} \rangle_{IJ}
\lvert \Psi^{+} \rangle_{KL}
\lvert \Psi^{+} \rangle_{MN}.
\end{equation*}
Compare with state $(\ref{inputstate})$,
the two states are the same if we take $A = A'$ and $E=E'$. Then, if
we apply steps 2-7 as in the previous section, we obtain the state
\begin{equation*}
\left|\Psi_{7'}\right\rangle=\lvert \Psi^{+} \rangle_{AF}
\otimes\lvert \Psi^{+} \rangle_{BE}.
\end{equation*}
Thus, the only modification we have to make is to replace step 1 in
the procedure of the previous subsection.
The whole procedure for network coding over the network of Fig.~\ref{quantumgraph}
is described as Table~\ref{Proc_Encwithoutvr}.
\begin{table}
\caption{\label{Proc_Encwithoutvr} Encoding without additional registers}
\begin{ruledtabular}
 \begin{tabular}{ll}
  {\bf Step 1.}& $s_{1}$ and $r_{1}$ apply ${\bf
    Con}^{A}_{C->D}$,\\
& $s_{2}$ and $r_{1}$ apply ${\bf Con}^{E}_{G->H}$.  \\
  {\bf Step 2.}& $r_{1}$ and $r_{2}$ apply ${\bf Add}^{{D,H}}_{I->J}$. \\
  {\bf Step 3.}& $r_{2}$, $t_{1}$ and $t_{2}$ apply ${\bf Fanout}^{J}_{K->L, M->N}$. \\
  {\bf Step 4.}& $t_{1}$ applies ${\bf CNOT}^{(N,F)}$, $t_{2}$ applies ${\bf CNOT}^{(L,B)}$.\\
  {\bf Step 5.}& $t_{2}$ and $r_{2}$ apply ${\bf Rem}_{L->J}$,\\
& $t_{1}$ and $r_{2}$ apply ${\bf Rem}_{N->J}$. \\
  {\bf Step 6.}& $r_{2}$ and $r_{1}$ apply ${\bf RemAdd}_{J->{D,H}}$. \\
  {\bf Step 7.}& $r_{1}$ and $s_{1}$ apply ${\bf Rem}_{D->A}$,\\
& $r_{1}$ and $s_{2}$ apply ${\bf Rem}_{H->E}$. \\
 \end{tabular}
\end{ruledtabular}
\end{table}
We show the corresponding circuit in Fig.~\ref{circuit}.
\begin{figure*}
\includegraphics[width=510pt]{./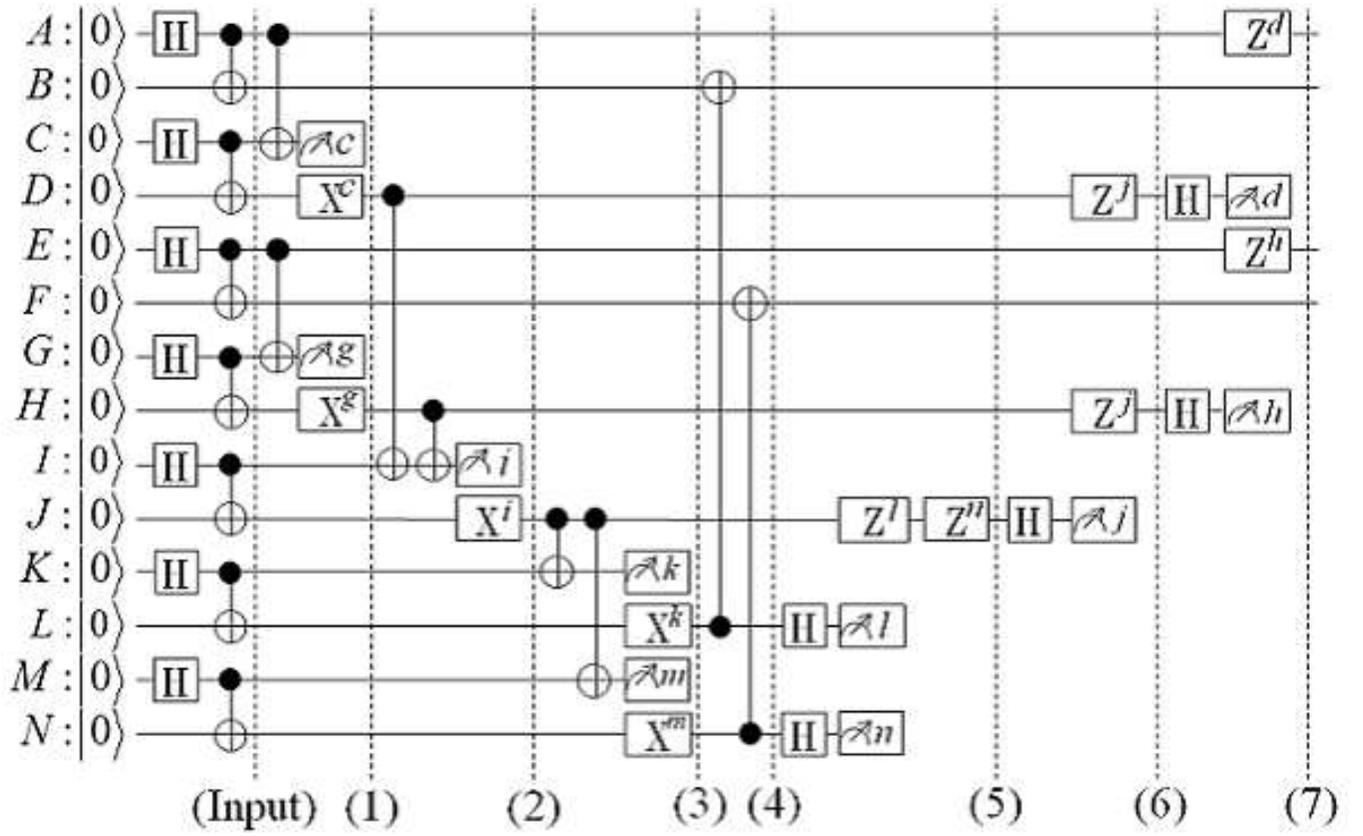}
\caption{\label{circuit}Overall view of the encoding
  circuit. Parenthetic numbers
  refer to the state after each step of the encoding procedure of
  Table~\ref{Proc_Encwithoutvr}.}
\end{figure*}

\section{Conclusion}
Our protocol shows that quantum network coding techniques are
operational using only LOCC and shared EPR-pairs between adjacent
repeaters (i.e., without additional quantum registers).
This method has been described for the butterfly network but can be
actually  extended to other linear network coding schemes on other
graphs.
We expect that this protocol will be a fundamental tool to apply
techniques from network coding to real
quantum repeater networks.
\bibliography{201107_PRA}% Produces the bibliography via BibTeX.

\end{document}